\numberwithin{equation}{section}
\newtheorem*{theorem*}{Theorem}
\newtheorem{lemma}{Lemma}
\theoremstyle{definition}
{}
\theoremstyle{remark}
\newcommand{\field}[1]{\ensuremath{\mathbb{#1}}}
\newcommand{\RR}{\field{R}}
\newcommand{\TT}{\field{T}}
\newcommand{\ZZ}{\field{Z}}
\DeclareMathOperator{\Tr}{\mathrm{Tr}}
\newcommand{\beq}{\begin{equation}\begin{aligned}}
\newcommand{\eeq}{\end{aligned}\end{equation}}
\newcommand{\eqdef}{\overset{\text{def}}{=}}
\newcommand{\al}{\alpha}
\newcommand{\be}{\beta}
\newcommand{\del}{\delta}
\newcommand{\pa}{\partial}
\newcommand{\ov}{\over}
\newcommand{\e}{\epsilon}
\newcommand{\curly}[1]{\mathscr{#1}}
\newcommand{\cH}{\curly{H}}
\newcommand{\lam}{\lambda}
\newcommand{\ex}[1]{\langle #1 \rangle}
\DeclareMathAlphabet{\mathcalligra}{T1}{calligra}{m}{n}
\DeclareFontShape{T1}{calligra}{m}{n}{<->s*[2.2]callig15}{}
\begin{document}
\title{Supersymmetry and trace formulas III. Frenkel trace formula}
\author{Changha Choi}
\address{Perimeter Institute for Theoretical Physics,
Waterloo, Ontario, N2L 2Y5, Canada}
\author{ Leon A. Takhtajan}
\address{Department of Mathematics,
Stony Brook University, Stony Brook, NY 11794 USA; 
Euler International Mathematical Institute, Pesochnaya Nab. 10, Saint Petersburg 197022 Russia}
\begin{abstract}
By applying the new supersymmetric localization principle introduced in \cite{Choi:2021yuz,Choi:2023pjn}, we present two complementary approaches for the path integral derivation of the `non-chiral' trace formula for a semisimple compact Lie group $G$, which generalizes the so-called Frenkel trace formula. Corresponding physical systems for each picture are the quantum mechanical sigma model on $G$ and the gauged sigma model on $G\times G$, and the approaches closely follow the spirit of the Eskin trace formula \cite{Choi:2021yuz} and the Selberg trace formula \cite{Choi:2023pjn} respectively. These methods provide a natural conceptual bridge between two seemingly independent derivations in \cite{Choi:2021yuz} and \cite{Choi:2023pjn}. 

\end{abstract}
\keywords{}
\subjclass[2020]{}
\maketitle
\tableofcontents
\section{Introduction} \label{sec:intro}
 In \cite{Choi:2021yuz,Choi:2023pjn}, we initiated the path integral study of classical trace formulas by formulating a new supersymmetric localization principle. It was applied to the Eskin trace formula \cite{MR0206535}, which deals with the spectrum of the Laplacian on compact Lie groups, and to the Selberg trace formula \cite{selberg1956harmonic},
 which deals with the spectrum of the Laplacian on weakly symmetric Riemannian spaces. 
 
 Although the main ideas, as articulated in \cite{Choi:2021yuz,Choi:2023pjn}, were the same, the physical models and observables used were quite different. Thus in case of the Eskin trace formula, we used  a supersymmetric non-linear sigma model on $G$ and the supertrace with fermionic zero modes inserted as a natural observable, while in case of the Selberg trace formula, we used a supersymmetric gauged sigma model on a quotient of a group manifold by a discrete subgroup, and a natural observable was a singular line operator with fermionic zero modes attached.
 
In the present paper, we consider a `non-chiral' generalization of the Eskin trace formula for a general semisimple  compact Lie group $G$, the so-called Frenkel trace formula.
The methods used for its derivation naturally bridge  our approaches in \cite{Choi:2021yuz} and \cite{Choi:2023pjn}.

Let's recall that the Eskin trace formula deals with the operator trace of the Euclidean evolution operator on the Hilbert space $L^2(G, dg)$,
\beq \label{eq:Eskin}
\Tr \left[L_{g_l} e^{-{1\ov 2}\be {\Delta_G}} \right]=\sum_{\lam \in\text{Irrep}\, G}d_\lam\chi_\lam (g_l) e^{-{1\ov 2}\be C_2(\lam)},
\eeq
where $dg$ is the Haar measure on $G$ is associated with the Riemannian metric determined by the negative of the Cartan-Killing form, and $L_{g_l}$ is a left translation operator on $L^2(G)$, $L_{g_{l}}f(g)=f(g_l g)$. This formulation is in some sense `chiral' since it only involves the left action (of course, one can also formulate it in terms of the right action).
The Eskin trace formula expresses this trace in Lie algebraic terms as a sum over the characteristic lattice $\Gamma=\{\gamma\in\mathfrak t:e^\gamma=1\}$ of $G$, where $\frak{t}$ is a Cartan subalgebra of the Lie algebra $\frak{g}=Lie(G)$. Namely,
\beq \label{eq:Eskin-1}
\Tr \left[L_{g_l} e^{-{1\ov 2}\be {\Delta_G}} \right]=  {\text{vol}(G) e^{{1\ov 2}\be\ex{\rho,\rho}}\ov (2\pi\be)^{n/2}} \sum_{\gamma \in\Gamma}\prod_{\al \in R_+}{{1\ov 2}\ex{\al,h+\gamma}\ov \sinh{1\ov 2}\ex{\al,h+\gamma}} e^{-{\Vert h+\gamma\Vert^{2}\ov 2\be}},
\eeq
where $n=\dim\mathfrak g$, $\rho={1\ov 2}\sum_{\alpha\in R_+}\al$ is the Weyl vector, $g_l = e^{h}$ with regular $h\in\frak{t}$, and $\Vert h+\gamma\Vert^{2}=(h+\gamma,h+\gamma)$, where $(~,~)$ is the inner product on $\frak{g}$, given by the negative of the Cartan-Killing form  (see \cite{Choi:2021yuz} for details).

Physically, the Eskin trace formula deals with the integrated propagator
\beq \label{eq:Eskinphy}
\Tr \left[L_{g_l} e^{-{1\ov 2}\be {\Delta_G}} \right]=\int_{G} \, \langle g_l g | e^{-{1\ov 2}\be \Delta} | g\rangle dg =\text{vol}(G) \langle g_l | e^{-{1\ov 2}\be \Delta} |1\rangle 
\eeq
of the quantum mechanical non-linear sigma model on $G$. Note that the second equality in \eqref{eq:Eskinphy} is due to the bi-invariance of the action.

A remarkable physical interpretation of the Eskin trace formula has been found in \cite{Schulman:1968yv,dowker1970sum,Dowker:1970vu,marinov1979dynamics,Picken:1988ev,Camporesi:1990wm,Choi:2021yuz}. Namely, the right-hand side of \eqref{eq:Eskin-1} is precisely the one-loop contribution around the non-trivial critical points of the action, which are geodesics connecting $g$ and $g_l g$. Moreover, using a new supersymmetric localization principle \cite{Choi:2021yuz},  it is possible to show that the path integral representing \eqref{eq:Eskinphy} localizes to these geodesics. The essential part of the  localization locus is in one-to-one correspondence  with the lattice $\Gamma$, and the exponents in the exponential factors in \eqref{eq:Eskin-1} correspond to the classical action. The elementary prefactors correspond to one-loop determinants, while the exponent involving the Weyl vector corresponds to the DeWitt term, which is obtained  from the supersymmetry algebra \cite{Choi:2021yuz}.

The natural extension of the Eskin trace formula comes from a simple observation that there are other commuting operators that can be inserted into the trace \eqref{eq:Eskin}. Namely, consider the right translation operators $R_{g_r}$ on $L^2(G)$, $R_{g_r}f(g)=f(gg_r)$ and the following `non-chiral' trace
\beq \label{eq:Frenkel bosonic trace}
\Tr \left[L_{g_l} R_{g_r^{-1}} e^{-{1\ov 2}\be {\Delta_G}} \right]=\sum_{\lam \in\text{Irrep}\, G}\chi_\lam (g_l)\chi_\lam (g_r^{-1})e^{-{1\ov 2}\be C_2(\lam)}.
\eeq
It is natural to ask whether a similar trace formula exists. The answer is positive, and such formula, in the case of simply connected, simple compact Lie groups, was derived by I. Frenkel  \cite[Theorem 4.3.4]{Frenkel1984}  by combining the Weyl character formula and the Poisson summation formula. The Frenkel trace formula has the following form (see also \cite{bismut2008hypoelliptic})
\beq \label{eq:Frenkeltf}
\begin{gathered}
\Tr \left[L_{g_l} R_{g_r^{-1}} e^{-{1\ov 2}\be {\Delta_G}} \right] \\ ={\text{vol}(\TT )e^{{1\ov 2}\be \ex{\rho,\rho}}\ov (2\pi \be)^{r \ov 2}\mathfrak s (h_l)\mathfrak s (-h_r)}\sum _{ (w,\gamma)\in W \times 2\pi i Q^\vee }\epsilon(w) e^{-{\Vert h_l-w h_r+\gamma\Vert^{2}\ov2 \be}},
\end{gathered}
\eeq
where  $\TT\subset G$ is the maximal torus, $r=\text{dim}\,\TT$, and $g_{l,r}=e^{h_{l,r}}$ with regular $h_{l,r}\in \mathfrak t$. 
Furthermore, $Q^\vee$ is the coroot lattice, $W$ is the Weyl group and $\mathfrak s(h)$ is a denominator of the Weyl character formula 
\beq \label{eq:Weylden}
\mathfrak s (h)\eqdef \prod_{\al\in R_+}2\sinh{\ex{\al,h}\ov 2}.
\eeq

Having in mind the rich history of physical interpretation of the Eskin trace formula, one can ask whether the Frenkel trace formula \eqref{eq:Frenkeltf} can be interpreted as a  semiclassical sum over critical points. For this aim we observe that  similar to  \eqref{eq:Eskinphy}, the operator trace in \eqref{eq:Frenkeltf} 
can be represented as the  integrated propagator of a quantum mechanical non-linear sigma model on $G$,
\beq \label{eq:Frenkelphy}
\Tr\left[L_{g_l}R_{g_r^{-1}} e^{-{1\ov 2}\be \Delta_G}\right]=\int_{G}  \langle  g_l g g_r^{-1} |e^{-{1\ov 2}\be \Delta_G} | g\rangle dg.
\eeq

However, there are significant differences between Frenkel and Eskin trace formulas. First, the integrated propagator in \eqref{eq:Frenkelphy} cannot be simplified as in \eqref{eq:Eskinphy}. Second, the semiclassical saddle points of the integrand in \eqref{eq:Frenkelphy} for each $g$ are geodesics connecting $g$ and $g_l g g_r^{-1}$, and each of them has a semiclassical action proportional to the geodesic  length square, which non-trivially depends on $g$. However, the right-hand side of the Frenkel trace formula \eqref{eq:Frenkeltf} has an exponent which does not depend on $g$ at all. Moreover, the sum in the Frenkel trace formula additionally goes over the Weyl group $W$ (note that $\Gamma=2\pi iQ^\vee$ for simply connected $G$), whose semiclassical interpretation in terms of the geodesics on $G$ is not clear. 

Here we present two complementary approaches to semiclassical  interpretation of the Frenkel trace formula, which are exact and obtained from our new supersymmetric localization principle.   As a result, we get a general form of the  Frenkel trace formula, valid for an arbitrary semisimple compact Lie group. 

The first approach is  based on the non-linear sigma model on $G$, used in the proof of Eskin trace formula  in \cite{Choi:2021yuz}. Suitably modifying the framework by incorporating the right twist $R_{g_r^{-1}}$, we show that the localization formalism worked out in \cite{Choi:2021yuz} is adaptable in this case.

The second approach is based on the interpretation of  the non-linear sigma model on $G$ as a gauged non-linear sigma model on $G\times G$ using the isomorphism
\beq 
 (G\times G)/G\simeq G.
\eeq
The derivation of the Frenkel trace formula using this approach becomes quite analogous to the derivation of the Selberg trace formula in \cite{Choi:2023pjn} by localization.  We will see how many novel concepts introduced in \cite{Choi:2023pjn} are naturally applied here. 

The main mathematical input in both approaches is the Harish-Chandra formula \cite{harish1957differential} (see also \cite[Chapter 7.5]{berline2003heat})
\beq \label{eq:HC}
\int_{G/\TT} e^{i\ex{X,g^{-1}\lam g}} dg= \left(\prod_{\al\in R_+} {2\pi \ov   i \ex{\al,X} \ex{\al, \lam }}  \right)  \sum_{w\in W}\e(w) e^{i\ex{w\lam,X}},
\eeq
where $X,\lam\in \mathfrak t$ are regular, and the Haar measure $dg$ on $G$ is associated with the Riemannian metric on $G$, determined by the negative of the Cartan-Killing form. It identifies the orbital integral --- Fourier transform of a Liouville volume form on a coadjoint orbit --- with the sum over the Weyl group.

The organization of the paper is the following. In Section \ref{sec:sm} we prove the Frenkel trace formula using our new supersymmetric localization principle, applied to the non-linear sigma model on $G$ as in  \cite{Choi:2021yuz}. In Section \ref{sec:gsm} we follow \cite{Choi:2023pjn} and present another derivation, that uses the gauged sigma model on $G\times G$. These two approaches bridge our papers \cite{Choi:2021yuz} and \cite{Choi:2023pjn}, and these two sections can be read independently of each other. 

\subsection{Acknowledgments} The research of C.C. was supported by the Perimeter Institute for Theoretical Physics. Research at Perimeter Institute is supported in part by the Government of Canada through the Department of Innovation, Science and Economic Development and by the Province of Ontario through the Ministry of Colleges and Universities. The second author (L.T.) thanks E. Witten for raising the question of the non-chiral extension of the Eskin trace formula, and I. Frenkel for pointing to the reference \cite{WENDT200165}.

\section{Supersymmetric sigma model on $G$} \label{sec:sm}

We start by recalling necessary basic facts about the supersymmetric nonlinear sigma model on a compact semisimple Lie group $G$ (see \cite{Choi:2021yuz} for the detailed exposition and notations).
The total Hilbert space of the model is 
$$\curly H=L^2(G)\otimes \curly H_{F,\mathfrak g},$$ 
where $\curly H_{F,\mathfrak g}$ is the Hilbert space of $n$ free Majorana fermions (irreducible Clifford algebra module of dimension $2^{n/2}$). Let $\frak{g}$ be the Lie algebra of $G$ with a basis $\{T_{a}\}$, where $[T_a,T_b]=f_{ab}^cT_c$, and let $g_{ab}=-B(T_a,T_b)$,  where $B$ is the Killing form.
The supercharge $\hat{Q}$ is the following operator on $\cH$, 
\beq
\hat Q=\hat\psi^a \hat l_a+{i\ov 6}f_{abc}\hat\psi^a\hat\psi^b \hat\psi^c,
\eeq
where $\hat l_a$ are first order differential operators on $L^{2}(G)$ generated by the left-invariant vector fields on $G$, and $\hat \psi^a$ are Majorana fermions. They satisfy the following commutation relations
\beq
\quad [\hat{l}_{a},\hat{l}_{b}]=-if_{ab}^c\hat l_c, \quad [\hat \psi^a,\hat \psi^b]=g^{ab}.
\eeq

The supersymmetric quantum Hamiltonian is given by
\beq
\hat H=\hat Q^2={1\ov2} {\Delta}+{R\ov 12}\hat I,
\eeq
where $\Delta=g^{ab}\hat l_a \hat l_b$ is the Laplace operator on $G$, and $R$ is the scalar curvature of  the Cartan-Killing metric on $G$. Using the Freudenthal-de Vries' `strange formula', we have $R=6{\ex{\rho,\rho}}$, where $\rho={1\ov 2}\sum_{\alpha\in R_+}\al$ is the Weyl vector.

The main task is to identify the localizable supertrace, which is directly related to the bosonic trace \eqref{eq:Frenkel bosonic trace} of the Frenkel trace formula. A generic form is
\beq\label{O-F}
\text{Str}\left[\mathcal O L_{g_l} R_{g_r^{-1}} e^{-\be \hat H}\right]
\eeq
with the insertion of some operator $\mathcal O$, which we will identify later. As was explained in the Section \ref{sec:intro}, we can assume that $g_l,g_r\in\TT$, so $g_{l,r}=e^{h_{l,r}}$ where $h_{l,r}\in \mathfrak t$.

Also,  since we extended the bosonic Hilbert space $L^{2}(G)$ to the Hilbert space $\cH$, we need to specify how operators $L_{g_l}$, $R_{g_r^{-1}}$ in \eqref{O-F} act on the fermionic Hilbert space $\curly H_{F,\mathfrak g}$. It is natural to require that operators $L_{g_l}$, $R_{g_r^{-1}}$ and $\hat Q$ commute. This can be achieved by the following natural action
\beq \label{eq:LRactionHF}
\quad [L_{g_l},\hat\psi ]=0,\quad [R_{g_r^{-1}},\hat \psi]=g_r \hat \psi g_r^{-1},\quad\text{where}\quad \hat\psi=\psi^a T_a.
\eeq

There are two natural ways of interpreting the insertion of operators $L_{g_l}$, $R_{g_r^{-1}}$ into \eqref{O-F} in terms of the path integral formulation of the supertrace. One is to change periodic boundary conditions to twisted boundary conditions while keeping the action associated with $\hat H$, and the other is to modify the action to a twisted one while keeping the periodic boundary conditions. Here,  as in \cite{Choi:2021yuz}, we adopt the latter point of view.

Namely, we note that $L_{g_l}$ and $R_{g_r^{-1}}$ are unitary operators of the left and the right global $G$-symmetries. Therefore, the twisted action of the model can be understood as a result of coupling to the background gauge fields $A_l={h_l\ov \be}$ and $A_r={h_r\ov \be}$. This straightforwardly determines the twisted Euclidean action as 
\beq \label{eq:twistedS}
~&S^{h_l,h_r}_E=\int_0^\be  \left( {1\ov2}(J^{l,r},J^{l,r})+{1\ov 2}(\psi,(\pa_t+\text{ad}_{h_r/\be})\psi) \right) d\tau ,
\\ & J^{l,r}= J+\text{Ad}_{g^{-1}}{h_l\ov \be} -{h_r\ov \be},\quad J=g^{-1}\dot g,
\eeq
where $(\, ,\, )$ is a bilinear form on $\mathfrak g$ given by the Cartan-Killing metric.

Moreover, since $L_{g_l}$, $R_{g_r^{-1}}$ and $\hat Q$ are mutually commuting, the twisted action possesses a supersymmetry, which is manifested by the transformations
\beq\label{susy-1}
~&\del g= g\psi
\\& \del \psi=-J^l-\psi\psi,
\eeq
where $J^l=J+\text{Ad}_{g^{-1}}{h_l\ov \be}$. 

It is really nice that this supersymmetry is exactly the same as the one used in our derivation of the Eskin trace formula \cite{Choi:2021yuz}. Namely, it 
does not depend on  $h_r$, because our supersymmetric particle on $G$ is endowed with the left-invariant connection with totally antisymmetric parallelizing torsion.

For regular $h_r$, the twisted action \eqref{eq:twistedS} has $r=\text{dim}\,\TT$ fermionic zero modes,
\beq
\chi^i={1\ov \be}\int_0^\be \psi^i d\tau, \quad   T_i \in \mathfrak t,\quad i=1\dots r,
\eeq
where we expand $\psi$ in terms of the Cartan-Weyl basis \cite{Choi:2021yuz}, so $T_j=iH_j$, where $H_j\in i\mathfrak t$ and $\ex{H_i,H_j}=\del_{ij}$.

Choosing the operator $\mathcal{O}$ in \eqref{O-F}  to be $ c_r \hat \chi^1 \cdots \hat \chi^{r}$ where\footnote{The overall sign ambiguity in the definition of $c_r$ is fixed by the definition of the fermion measure $\mathscr D\psi$ in the path integral;  see \cite{Choi:2021yuz} for details.} $c_r=\pm i^{r(r-1)/2}$, we get the following supertrace 
\beq\label{str-1}
I=\text{Str}_{\curly H}\left[ c_r \hat \chi^1 \cdots \hat \chi^{r}L_{g_l}R_{g_r^{-1}}e^{-\be \hat H}\right]=\bm\int_{\Pi TLG}  \chi^1 \cdots \chi^r e^{-S^{h_l,h_r}_E} \mathscr D  g \mathscr D\psi,
\eeq
where $\Pi TLG$ denotes the configuration space of field configurations
$(g,\psi)$ consisting of a loop $g\in LG$ together with a Grassmann-odd
tangent vector $\psi\in T_g LG \simeq \Gamma(S^1_\beta,g^*TG)$.
Here $LG \equiv \mathrm{Map}(S^1_\beta,G)$ denotes the free loop group of $G$,
i.e.\ the space of smooth maps $g(\tau):S^1_\beta\to G$.

It corresponds to the operator part of the Frenkel trace formula, since
\beq \label{eq:SGsupertrace}
\begin{gathered}
\text{Str}_{\curly H}\left[ c_r \hat \chi^1 \cdots \hat \chi^{r}L_{g_l}R_{g_r^{-1}}e^{-\be \hat H}\right]
\\
=e^{-{1\ov 2}\be 
\ex{\rho,\rho}}\Tr_{L^2(G)}\left[L_{g_l}R_{g_r^{-1}}  e^{-{1\ov 2}\be \Delta_G }\right] 
\text{Str}_{\curly H_{F,\mathfrak g}}\left[ c_r
\hat \chi^1 \cdots \hat \chi^{r}R_{g_r^{-1}}\right],
\end{gathered}
\eeq
where $L_{g_l}$ is dropped out of the supertrace because of \eqref{eq:LRactionHF}. The fermionic trace can be computed as in \cite{Choi:2023pjn}, which gives
\beq \label{eq:SGfermionint}
\text{Str}_{\curly H_{F,\mathfrak g}}\left[ c_r
\hat \chi^1 \cdots \hat \chi^{r}R_{g_r^{-1}}
\right]=\prod_{\al \in R_+} -2 \sinh \frac{\ex{h_r,r}}{2}=\mathfrak s (-h_r). 
\eeq
Now it is straightforward to see that the path integral in \eqref{str-1} can be computed by the new supersymmetric localization principle in \cite{Choi:2021yuz}, where the localizing deformation is the same as in our derivation of the Eskin trace formula. Namely, 
\beq
~&V=-{1\ov 2}\int_0^\be  (   \dot J^l, \dot \psi) d\tau, 
\\& \del V={1\ov 2}\int _0^\be \left( (\dot J^l, \dot J^l)+(\dot \psi,(\pa_\tau+\text{ad}_{J^l})\dot \psi)  \right) d\tau,
\eeq
so the remaining task is to evaluate the localized path integral
\beq
I=\lim_{s\rightarrow \infty}\bm\int_{\Pi TLG}  \chi^1 \cdots \chi^r e^{-S^{h_l,h_r}_E-s\del V} \mathscr D  g \mathscr D\psi.
\eeq

The localization locus is the subset in $LG$, defined by  $\dot J^l=0$. Similar to \cite{Choi:2021yuz}, it follows from the periodic boundary conditions $g(0)=g(\be)$  that solutions are
\beq \label{eq:SGloc}
g(\tau)=e^{{1\ov \be}\gamma \tau }g_0, \quad \gamma\in \Gamma,
\eeq
where $g(0)=g_{0}\in G$ is the initial condition, and $\Gamma=\{\gamma\in\mathfrak t: e^\gamma=1\}$ is the characteristic lattice. Therefore after localization, the infinite-dimensional bosonic integration domain $LG$ reduces to the set \eqref{eq:SGloc}, parametrized by $g_0\in G$ and $\gamma\in \Gamma$. The corresponding classical action is given by
\beq
S^{h_l,h_r}_E(g_0,\gamma)=\frac{1}{2\beta}\left(\Vert h_1 + \gamma\Vert^{2}-2(h_l+\gamma,g_0 h_r g_0^{-1})+\Vert h_r\Vert^{2}\right).
\eeq

The computation of the one-loop functional determinants is identical to the one performed in \cite{Choi:2021yuz}, so we obtain
\beq\label{1-loop-1}
\begin{gathered}
{\text{Pf}(-\pa_\tau^3-\text{ad}_{(h_l+\gamma)/\be } \pa_\tau^2)|_{L \mathfrak g/\mathfrak g}\ov 
\det(-\pa_\tau^2-\text{ad}_{(h_l+\gamma)/\be}\pa_\tau)|_{L \mathfrak g/\mathfrak g}
}={1\ov \be^{n\ov  2}} \prod_{\al \in R_+}{{1\ov 2}\ex{\al,h_l+\gamma } \ov   i \sinh{1\ov 2}\ex{\al,h_l+\gamma } }  \\
={1\ov \be^{n\ov  2} } {\pi(h_l+\gamma)  \ov  \mathfrak s(h_l+\gamma
)},
\end{gathered}
\eeq
where $\mathfrak s(h)$ was defined in \eqref{eq:Weylden}, and we have put 
\beq
\boldsymbol{\pi}  (h)\eqdef \prod_{\al \in R_+} -i\ex{\al,h}.
\eeq

The remaining finite-dimensional fermionic integral over the space $\Pi  (\mathfrak g/\mathfrak t) \subset \Pi L\mathfrak   g$ of constant modes $\psi_{\mathfrak g/\mathfrak t}$ is easily computed to be
\beq \label{eq:zeroint1}
\int_{\Pi (\mathfrak g/\mathfrak t)} e^{-\int_0^\be  {1\ov 2} (\psi_{\mathfrak g/\mathfrak t},(\pa_t+\text{ad}_{h_r/\be})\psi_{\mathfrak g/\mathfrak t} ) d\tau } d\psi_{\mathfrak g/\mathfrak t}  =\prod_{\al \in R_+}-i \ex{\al,h_r}=\boldsymbol\pi (h_r).
\eeq
Thus the localized path integral $I$ in \eqref{O-F} takes the form
\beq\label{int-loc-1}
\begin{gathered}
I=\sum_{\gamma \in \Gamma}\int_G  {\boldsymbol \pi (h_l+\gamma )\boldsymbol\pi (h_r) \ov (2\pi \be)^{n\ov  2} \mathfrak s(h_l+\gamma)} e^{-{\Vert h_l +\gamma\Vert^{2}-2(h_l+\gamma,g_0 h_r g_0^{-1})+\Vert h_r\Vert^{2}\ov 2\be}}\,  dg_0
\\= \sum_{\gamma \in \Gamma} {\boldsymbol \pi (h_l+\gamma )\boldsymbol\pi (h_r) \ov (2\pi \be)^{n\ov  2} \mathfrak s(h_l+\gamma)}  e^{-{\Vert h_l+\gamma\Vert^{2}+\Vert h_r\Vert^{2}\ov  2\be }}\int_G e^{{(h_l+\gamma,g_0 h_r g_0^{-1})\ov \be}}\,dg_0
\\= \text{vol}(\TT) \sum_{\gamma \in \Gamma} {\boldsymbol \pi (h_l+\gamma )\boldsymbol\pi (h_r) \ov (2\pi \be)^{n\ov  2} \mathfrak s(h_l+\gamma)} e^{-{\Vert h_l+\gamma\Vert^{2}+\Vert h_r\Vert^{2}\ov  2\be }}\int_{G/\TT} e^{{(h_l+\gamma,g_0 h_r g_0^{-1})\ov \be}}\,dg_0. 
\end{gathered}
\eeq

The orbital integral in \eqref{int-loc-1} can be evaluated using the analytic continuation $X\rightarrow -iX$ of the Harish-Chandra formula \eqref{eq:HC} for compact $G$,
\beq \label{eq:HCAC}
\int_{G/\TT} e^{(X,g^{-1}\lam g)} dg= {(2\pi)^{{1\ov 2}\text{dim}(G/\TT)} \ov \boldsymbol \pi (X) \boldsymbol\pi(\lam )}   \sum_{w\in W}\e(w) e^{(w\lam,X)},\quad X,\lambda\in\frak{t}.
\eeq
Thus we obtain
\beq
I= \sum_{(w,\gamma) \in W\times \Gamma} {\text{vol}(\TT)\ov (2\pi \be)^{r \ov  2} \mathfrak s(h_l+\gamma ) }  \e(w) e^{-{1\ov 2\be}\Vert h_l-wh_r+\gamma\Vert^{2}}
\eeq
where $r=\text{dim}\,\TT$. Together with \eqref{eq:SGsupertrace} and \eqref{eq:SGfermionint}, we get
\beq
\begin{gathered}
\Tr_{L^2(G)}\left[L_{g_l}R_{g_r^{-1}}  e^{-{1\ov 2}\be \Delta_G }\right] \\={\text{vol}(\TT) e^{{1\ov 2}\be\ex{\rho,\rho}} \ov (2\pi \be)^{r\ov 2}  } \sum_{(w,\gamma) \in W\times \Gamma} {\e(w) e^{-{\Vert h_l-wh_r+\gamma\Vert^{2}\ov 2\be}} \ov \mathfrak s(h_l+\gamma )  \mathfrak s(-h_r) }.
\end{gathered}
\eeq
At first glance,  the denominator makes this formula look woefully asymmetric with respect to  $h_l$ and $h_r$. However, we have the following basic fact.

\begin{lemma} \label{eq:pairingId1} Let $G$ be a compact semi-simple Lie group, and $\Gamma$ be its characteristic lattice. Then
$e^{\ex{\al, \gamma }}= 1$ for all $\al\in R$ and $\gamma \in \Gamma$.
\end{lemma}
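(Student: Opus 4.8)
The plan is to unwind the definitions of the characteristic lattice $\Gamma$ and the root system $R$, and use the fact that roots are the weights of the adjoint representation of $G$. Recall that $\Gamma = \{\gamma \in \mathfrak{t} : e^\gamma = 1\}$, where the exponential is taken in $G$. For $\gamma \in \Gamma$, the element $e^\gamma$ equals the identity in $G$, hence acts as the identity in every finite-dimensional representation of $G$, in particular in the adjoint representation on $\mathfrak{g}$.

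First I would decompose $\mathfrak{g}_{\mathbb{C}} = \mathfrak{t}_{\mathbb{C}} \oplus \bigoplus_{\alpha \in R} \mathfrak{g}_\alpha$ into root spaces. For $X_\alpha \in \mathfrak{g}_\alpha$, the adjoint action of a torus element $e^{h}$ ($h \in \mathfrak{t}$) is $\operatorname{Ad}_{e^h} X_\alpha = e^{\ex{\alpha, h}} X_\alpha$; this is the standard relation between the torus action and the root spaces (with $\ex{\,,\,}$ here extended $\mathbb{C}$-bilinearly, so that $\ex{\alpha,h}$ is the appropriate imaginary number for $h$ in the compact Cartan subalgebra). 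Taking $h = \gamma \in \Gamma$, the left-hand side is $\operatorname{Ad}_{e^\gamma} X_\alpha = \operatorname{Ad}_1 X_\alpha = X_\alpha$, so $e^{\ex{\alpha,\gamma}} X_\alpha = X_\alpha$, and since $X_\alpha \neq 0$ we conclude $e^{\ex{\alpha,\gamma}} = 1$ for every positive root, and then for all of $R$ by $R = R_+ \cup (-R_+)$.

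There is essentially no obstacle here; the only point requiring a little care is bookkeeping of conventions — namely that $\ex{\alpha, \gamma}$ in the paper's normalization is purely imaginary (it is $i$ times an integer multiple of $2\pi$ when $\gamma$ generates $\Gamma$), so that $e^{\ex{\alpha,\gamma}} = 1$ is genuinely the statement that $\ex{\alpha,\gamma} \in 2\pi i \mathbb{Z}$. I would state this cleanly by noting that $\alpha$, being a weight of the adjoint representation, is an integral weight, and integral weights take values in $2\pi i \mathbb{Z}$ on $\Gamma$ by the very definition of $\Gamma$ as the kernel of $\exp$ restricted to $\mathfrak{t}$ (equivalently, $\tfrac{1}{2\pi i}\Gamma$ is the coweight-type lattice paired integrally against the weight lattice). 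The adjoint-representation argument above makes this self-contained without invoking the classification of lattices.

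An alternative, even shorter route: since $e^\gamma = 1$ in $G$, conjugation by $e^\gamma$ is trivial on $G$, hence $e^{\operatorname{ad}_\gamma} = \operatorname{Ad}_{e^\gamma} = \operatorname{id}$ on $\mathfrak{g}$; restricting to any root vector gives the claim immediately. I would present the proof in two or three lines along these lines, and then (in the surrounding text following the lemma) use it to rewrite $\mathfrak{s}(h_l + \gamma)$ in terms of $\mathfrak{s}(h_l)$, restoring the apparent $h_l \leftrightarrow h_r$ symmetry of the trace formula.
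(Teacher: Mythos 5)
Your proof is correct and is essentially the same as the paper's: both reduce the claim to the adjoint action of $e^\gamma$ on the root spaces, using $\mathrm{Ad}\circ\exp=\exp\circ\,\mathrm{ad}$ and the fact that $e^\gamma=1$ forces the character $e^{\ex{\al,\gamma}}$ to be trivial. The only differences are cosmetic (a sign convention on the root-space eigenvalue and whether one phrases it as well-definedness of the torus character or as directly evaluating $\mathrm{Ad}_{e^\gamma}=\mathrm{id}$).
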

\begin{proof} For each $\al \in R$, there is a natural 1-dimensional representation of $\TT$ acting on $ \mathfrak g_{\al}$  by the adjoint action,
$$
\text{Ad}_{\bm t} X= e^{-\al(\bm h)} X,\quad \bm t=e^{\bm h} \in T,\quad X\in \mathfrak g_\al,
$$
which immediately follows from $\text{Ad}\circ \exp=  \exp\circ\,\text{ad}$. Since this equality is true for any $\bm h \in \mathfrak t$ satisfying $\bm t=e^{\bm h}$, we necessarily have $\al(\gamma)=\ex{\al,\gamma}\in 2\pi i \mathbb Z$ for all $\al\in R$ and $\gamma\in \Gamma$.
\end{proof}
From Lemma \ref{eq:pairingId1} we have $e^{{1\ov 2}\ex{\al, \gamma }}= \pm1$  which gives
\beq \label{eq:WeyldenId}
\mathfrak s(h_l+\gamma)=\mathfrak s (h_l) e^{\ex{\rho,\gamma} }=\mathfrak s (h_l) e^{-\ex{\rho,\gamma} },
\eeq
where $\rho={1\ov 2}\sum_{\al\in R_+}\al$ is the Weyl vector. 

As a result, we have obtained the generalization of the Frenkel trace formula for a general compact semisimple Lie group $G$,
\beq \label{eq:Frenkel trace formula final}
\begin{gathered}
  \Tr_{L^2(G)}\left[L_{g_l}R_{g_r^{-1}}  e^{-{1\ov 2}\be \Delta_G }\right]
\\={\text{vol}(\TT) e^{{1\ov 2}\be\ex{\rho,\rho}} \ov (2\pi \be)^{r\ov 2} \mathfrak s(h_l )  \mathfrak s(-h_r)   } \sum_{(w,\gamma) \in W\times \Gamma} \e(w) e^{\ex{\rho,\gamma}}e^{-{\Vert h_l-wh_r+\gamma\Vert^{2}\ov 2\be}} .
\end{gathered}
\eeq

In the special case when $G$ is simply connected, $\Gamma =2\pi iQ^\vee $ and  $e^{\ex{\rho,\gamma}}=1$ for all $\gamma\in \Gamma$, so our formula \eqref{eq:Frenkel trace formula final} reduces to the formula \eqref{eq:Frenkeltf}, derived by Frenkel \cite{Frenkel1984} (see also \cite{WENDT200165,bismut2008hypoelliptic}).

\section{Supersymmetric gauged sigma model on $G\times G$ }\label{sec:gsm}
As was briefly mentioned in Section \ref{sec:intro}, we can alternatively view $G$ in terms of the symmetric space
\beq \label{eq:Giso}
 (G\times G)/G\simeq G,
\eeq
where the quotient identifies $(g_1,g_2)\sim (g_1g,g_2g)$ and the isomorphism is given by the mapping $(g_1,g_2) \mapsto g_1g_2^{-1}$. 

In view of \eqref{eq:Giso}, we can equivalently represent the non-linear sigma model on $G$ as the gauged sigma model on $G\times G$, obtained by gauging the subgroup $G$ with the gauge symmetry given by 
diagonal right $G$-action on $G\times G$.

To write down the corresponding quantum mechanical model, following \cite{Choi:2021yuz} we first introduce the bosonic non-linear sigma model on the group $G\times G$ with the action 
\beq
\mathcal L^B_{G\times G}=\ex{J_1,J_1}+\ex{J_2,J_2},
\eeq
where $g_{p}$ are $G$-valued fields and $J_p=g^{-1}_p\dot{g}_p$, $p=1,2$, are left-invariant currents. This system has a Hamiltonian $\hat H^B_{G\times G}={1\ov 4}(\Delta_1+\Delta_2)$ acting on the Hilbert space $\mathscr H_{G\times G}=
L^2(G)\otimes L^2(G) $, and its spectrum is parameterized by $\text{Irrep } G\times \text{Irrep } G$.

As in \cite{Choi:2023pjn}, the quotient by $G$ can be physically realized by introducing a $\mathfrak g$-valued gauge field $A$ which is a connection on a principal $G$-bundle $P$ over a $0+1$ dimensional `spacetime' $S^{1}_{\beta}=\RR/\beta\ZZ$. This implements the following gauge symmetry
\beq
g_p\mapsto g_pg(t),\quad A\mapsto g^{-1}(t)Ag(t)+g^{-1}(t)\dot g(t), \quad p=1,2.
\eeq
Then the action of the bosonic gauged sigma model on $(G\times G)/G$ is
\beq \label{eq:bosonicgsm}
\mathcal L^B_{(G\times G)/G}=\ex{J_{1,A},J_{1,A}}+\ex{J_{2,A},J_{2,A}},
\eeq
where $J_{p,A}=J_p-A$ are  covariant currents, $p=1,2$.

Consider the operators $\hat{l}_{a}^1=\hat{l}_{a}\otimes I$ and $\hat{l}_{a}^2=I\otimes \hat{l}_{a}$ on $L^{2}(G\times G)=L^{2}(G)\otimes L^{2}(G)$,  where $\hat{l}_{a}$ were defined in Section \ref{sec:sm}. The quantization of the Gauss law 
$$J_{1,A}+J_{2,A}=0$$
leads to the constraints  
$$\hat{l}_{a}^1+\hat{l}^{2}_a=0,\quad a=1,\dots,n$$ 
on the states in $\mathscr H_{G\times G}=L^{2}(G\times G)$, and defines the physical Hilbert space $\mathscr H_{G\times G/G}\simeq L^{2}(G)$. Then it is straightforward to see that the physical spectrum of the Hamiltonian is isomorphic to $\text{Irrep }G$ with ${1\ov 2}C_2(R)$ as corresponding energy eigenvalues, therefore giving a physical interpretation of the isomorphism \eqref{eq:Giso}.

The corresponding supersymmetric gauged sigma model can be constructed in the spirit of \cite{Choi:2023pjn} where the Lagrangian is
\beq \label{eq:GSM Lag}
\mathcal L_{(G\times G)/G}=\ex{J_{1,A},J_{1,A}}+\ex{J_{2,A},J_{2,A}}+{i} \ex{\psi_1,D_A\psi_1}+{i} \ex{\psi_2,D_A\psi_2},
\eeq
which possess a supersymmetry
\beq 
~&\del g_p=ig_p \psi_n
\\&\del \psi_p=-J_{p,A}-i\psi_p\psi_p \\&\del A=0.
\eeq

Now we need to implement the action of $L_{g_l}R_{g_r^{-1}}$ in the gauged sigma model picture. The advantage of the coset formulation is that the `non-chiral' twist $L_{g_l}R_{g_r^{-1}}$ on $G$ becomes purely a left twist $L^{(1)}_{g_l} L^{(2)}_{g_r}$ acting on $G\times G$, where $L^{(p)}_g$ acts on the $p$-th factor of $G$, $p=1,2$. In terms of adding a background gauge field, what we need is a replacement
\beq
J_{p,A}\rightarrow J_{p,A}^{h_p}=J_{p,A}+{1\ov \be}g_p^{-1}h_pg_p,
\eeq
where for notational simplicity we have introduced $h_{1}=h_l$, $h_2=h_r$, so $g_l=e^{h_1},g_r=e^{h_2}$. The corresponding twisted action in the Euclidean signature becomes \
\beq \label{eq:SEgsm}
S'_E=\int_{0}^{\beta} \sum_{p=1,2}\left((J_{p,A}^{h_p},J_{p,A}^{h_p})+ (\psi_p,D_A\psi_p)\right)d\tau,
\eeq
and has a Euclidean supersymmetry
\beq \label{eq:susygsm}
~&\del_h g_p=g_p\psi_p,
\\&\del_h \psi_p=-J_{p,A}^{h_p}-\psi_p\psi_p,\quad p=1,2.
\eeq

This setup is quite analogous to \cite{Choi:2023pjn}, where bosons and fermions were coupled through the gauge field. Namely,  for regular $A$ the action \eqref{eq:SEgsm} has $2r=r+r$ fermion zero modes,  coming from the kernel of $\nabla_A=d+\text{ad}_A$ for each $\psi_{p}$, $p=1,2$. 
For each $\psi_p$ denote by $\chi_p^{1},\dots,\chi_{p}^{r}$ the zero modes and put 
\beq \label{eq:GSMfermzero}
\chi_p(A)=c_r 2^{r/2}\chi_p^1\dots \chi_p^r,
\eeq
where now the extra factor of $2^r$ appears because of the normalization of the fermionic Lagrangian in \eqref{eq:GSM Lag}. The normalization is fixed by the requirement that these zero modes are associated with the orthonormal basis of $\mathfrak t$ in the temporal gauge $\dot A=0$ and $A\in \mathfrak t$. Then we have the following identity, relating the bosonic trace in terms of a path integral associated with the gauged sigma model action $S'_E$ (cf.  \cite[Section 5]{Choi:2023pjn}),
\beq \label{eq:GSMmain}
\begin{gathered}
\Tr_{L^2(G)} \left[L_{g_l} R_{g_r^{-1}} e^{-{1\ov 2}\be {\Delta_G}} \right] 
\\ ={e^{{1\ov 2}\be\ex{\rho,\rho}} \ov \text{vol}(\mathcal G)}\bm\int  {\chi_1(A)\chi_2(A) e^{- S'_E}\ov  \left[ \text{Pf}\,'\left( i(\text{Hol}_{S_\be^1}^{-1/2}(\nabla _A)  - \text{Hol}_{S_\be^1}^{1/2}(\nabla _A)  )\right) \right]^2 }\prod_{p=1,2}\mathscr D g_p \mathscr D \psi_p \mathscr DA.
\end{gathered}
\eeq
Here the denominator comes from the fermionic path integral, and $e^{{1\ov 2}\be \ex{\rho,\rho}}$ is a DeWitt term that naturally comes from the supersymmetry algebra as in \cite{Choi:2021yuz,Choi:2023pjn}.

Finally,  following closely \cite{Choi:2023pjn}, we localize the right-hand side of \eqref{eq:GSMmain}. First, we gauge fix $A$ to the temporal gauge $A= h/\be=\text{const}$ and  use the residual gauge symmetry to effectively reduce the integration over $A$ to the integration along the holonomy $\bm t=e^{ h}\in \TT$ as
\beq \label{eq:gaugefixed1}
\begin{gathered}
\Tr_{L^2(G)} \left[L_{g_l} R_{g_r^{-1}} e^{-{1\ov 2}\be {\Delta_G}} \right] 
\\ ={e^{{1\ov 2}\be\ex{\rho,\rho}} \ov |W|\text{vol}(\TT)} \int_{\TT}   |\del(\bm t)|^2 \left(\bm\int  {\chi_1(h/\be)\chi_2( h /\be) e^{- S'_E}\ov \mathfrak s( h)^2 }\prod_{p=1,2}\mathscr D g_p \mathscr D \psi_p\right)d\bm t,
\end{gathered}
\eeq
where
\beq
\del(\bm t)=\det(1-\text{Ad}_{\bm t})_{\mathfrak g/\mathfrak t}=\prod_{\al\in R_+}\left(e^{\ex{\al,h}\ov 2}-e^{-{\ex{\al,h}\ov 2}} \right),\quad \bm{t}=e^{h}.
\eeq
Therefore $|\delta(\bm t)|^2=\mathfrak s(h)\mathfrak s(-h)=\mathfrak s(h)^2 (-1)^{\text{dim}(G/\TT)\ov 2}$, and hence \eqref{eq:gaugefixed1} is simplified as
\beq \label{eq:gaugefixed2}
\begin{gathered}
\Tr_{L^2(G)} \left[L_{g_l} R_{g_r^{-1}} e^{-{1\ov 2}\be {\Delta_G}} \right] 
\\=(-1)^{\text{dim}(G/\TT)\ov 2} {e^{{1\ov 2}\be\ex{\rho,\rho}} \ov |W|\text{vol}(\TT)} \int_{\TT}  \left( \bm\int  {\chi_1(h/\be)\chi_2( h/\be) e^{- S'_E} }\prod_{p=1,2}\mathscr D g_p \mathscr D \psi_p\right)d\bm{t}.
\end{gathered}
\eeq
 In accordance with the new localization principle \cite{Choi:2021yuz,Choi:2023pjn}, the path integral in the r.h.s. of \eqref{eq:gaugefixed2} admits the following invariant deformation $S'_E\rightarrow S'_E+\lambda \del_h (V_1+V_2)$ with
\beq
\begin{gathered}
V_p=-\int_0^\beta (\dot  J_{p}^{h_p},  \dot \psi_p) d\tau=-\int_0^\beta (\dot  J_{p,h/\be}^{h_p},  \dot \psi_p) d\tau, \\
\del V_p=\int _0^\be \left( (\dot J_{p}^{h_p}, \dot J_{p}^{h_p})+(\dot \psi,(\pa_\tau+\text{ad}_{J^{h_p}_p})\dot \psi)  \right) d\tau,
\end{gathered}
\eeq
where $J_p^{h_p}=J_p+{1\ov \be}g_p^{-1}h_p g_p$, $p=1,2$.

The localization locus is determined by the equation $\dot J_p^{h_p}=0$ whose solutions are
\beq \label{eq:locus}
g_p(\tau)=e^{{1\ov \be}\gamma_p \tau }g_{p,0}, \quad \gamma_p\in \Gamma.
\eeq
It is parametrized by $(\gamma_1,\gamma_2)\in \Gamma\times \Gamma$ and the corresponding classical action is 
\beq \label{eq:Scl}
\begin{gathered}
S'_E\left
(\{g_{p,0}\},\{\gamma_p\}, A
={h\ov \be}\right)\\=\frac{1}{\beta}\sum_{p=1,2} \left(\Vert h_p+\gamma_p\Vert^{2}-2(h_p+\gamma_p,g_{p,0} h g_{p,0}^{-1})+\Vert h\Vert^{2}\right).
\end{gathered}
\eeq
As in \eqref{1-loop-1}, the-loop determinant is given by
\beq
\prod_{p=1,2}{ \text{Pf}(-\pa_\tau^3-\text{ad}_{(h_p+\gamma_p)/\be } \pa_\tau^2)|_{L \mathfrak g/\mathfrak g}\ov 
\det(-\pa_\tau^2-\text{ad}_{(h_p+\gamma_p)/\be}\pa_\tau)|_{L \mathfrak g/\mathfrak g}
}&={1\ov \be^{n} } \prod_{p=1,2}{\boldsymbol {\pi}(h_p+\gamma_p)  \ov  \mathfrak s(h_p+\gamma_p
)},
\eeq
so using \eqref{eq:zeroint1}, we get for the remaining fermionic integral
\beq
\prod_{p=1,2}\int_{\Pi (\mathfrak g/\mathfrak t)} e^{-\int_0^\be   (\psi_{p,\mathfrak g/\mathfrak t},(\pa_t+\text{ad}_{h/\be})\psi_{p,\mathfrak g/\mathfrak t} ) d\tau } d\psi_{\mathfrak g/\mathfrak t} =2^{\text{dim}(G/\TT)}\boldsymbol\pi (h)^2.
\eeq

Combining these results, we see that localization reduces the path integral into the following orbital integral
\beq
\begin{gathered}
\Tr_{L^2(G)} \left[L_{g_l} R_{g_r^{-1}} e^{-{1\ov 2}\be {\Delta_G}} \right] = {(-1)^{\text{dim}(G/\TT)\ov 2}\text{vol}(\TT) 2^{\,r+\dim(G/\TT)} e^{{1\ov 2}\be\ex{\rho,\rho}} \ov |W|(2\pi\be) ^n} 
\\\times \sum_{\gamma_1,\gamma_{2}\in\Gamma}{\boldsymbol {\pi}(h_1+\gamma_1)  \ov  \mathfrak s(h_1+\gamma_1
)}{\boldsymbol {\pi}(h_2+\gamma_2)  \ov  \mathfrak s(h_2+\gamma_2
)}  e^{-{1\ov \be}\left( \Vert h_1+\gamma_1\Vert^{2} + \Vert h_2+\gamma_2\Vert^{2}\right)} 
\\ 
 \times \int_{\TT} \bm\pi(h)^{2}e^{-{2\ov \be}\Vert h\Vert^{2}}
\left(\prod_{p=1,2} \int_{G/\TT}   e^{{2\ov \be}(h_p+\gamma_p,g_{p,0}h g_{p,0}^{-1})}dg_{p,0}\right)d\bm{t},
\end{gathered}
\eeq
where the extra factor $2^r$ comes from \eqref{eq:GSMfermzero}, since the path integral measure for fermions is scale-invariant (see \cite[Remark 5]{Choi:2021yuz}). Here we effectively reduced the integration range of the bosonic zero modes $g_{p,0}$ to be $G/\TT$ with an overall factor of $\text{vol}(\TT)$  by using invariance of the classical action \eqref{eq:Scl} under the right quotient over $\TT$, i.e. $g_p\mapsto g_p \bm{t}$ for $\bm{t}\in \TT$.

Now we can use the Harish-Chandra formula \eqref{eq:HCAC} to compute the integrals over $G/\TT$, which gives
\beq \label{eq:inter}
\begin{gathered}
\Tr_{L^2(G)} \left[L_{g_l} R_{g_r^{-1}} e^{-{1\ov 2}\be {\Delta_G}} \right] 
={ (-1)^{\,\text{dim}(G/\TT)\ov 2}\text{vol}(\TT) e^{{1\ov 2}\be\ex{\rho,\rho}} \ov |W|(\pi\be) ^r} \\
\times \sum_{\gamma_1,\gamma_{2}\in\Gamma} \sum_{w_1,w_{2}\in W}\frac{ \e(w_1)\e(w_2)}{\mathfrak{s}(h_1+\gamma_1)\mathfrak{s}(h_2+\gamma_2)}  e^{-{1\ov \be}\left(\Vert h_1+\gamma_1\Vert^{2} + \Vert h_2+\gamma_2\Vert^{2}\right)} 
\\  \times\int _{\TT} 
e^{-{2\ov \be}\Vert h\Vert^{2}+{2\ov \be}(w_1(h_1+\gamma_1)+w_{2}(h_{2}+\gamma_{2}),h)}d\bm{t}.
\end{gathered}
\eeq
Completing the square in the total exponential factor in \eqref{eq:inter}, we obtain
\beq\label{eq:exp}
\begin{gathered}
-\frac{2}{\beta}\left\Vert h-\frac{w_1(h_1+\gamma_1)+ w_{2}(h_{2}+\gamma_{2})}{2}\right\Vert^{2} \\ - \frac{1}{2\beta}\Vert w_{1}h_{1}+w_{1}\gamma_{1}-w_{2}h_{2}-w_{2}\gamma_{2}\Vert^{2}.
\end{gathered}
\eeq

To simplify, for fixed $w_{1}, w_{2}\in W$, the summation over $\Gamma\times\Gamma$ in \eqref{eq:inter}, we use another basic fact.
\begin{lemma} \label{Lemma2} let $\rho$ be the Weyl vector. Then
$e^{\ex{\rho,w \gamma}}=e^{\ex{\rho, \gamma}} $ for all $w\in W$ and $\gamma\in \Gamma$.
\end{lemma}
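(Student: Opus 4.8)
The plan is to reduce the claim to the previously established Lemma \ref{eq:pairingId1}, which says that $\ex{\al,\gamma}\in 2\pi i\ZZ$ for all roots $\al$ and all $\gamma\in\Gamma$. The key observation is that the difference $w\gamma-\gamma$ lies in the \emph{coroot lattice} $Q^\vee$ for every $w\in W$ and every $\gamma\in\frak t$, since $W$ is generated by the simple reflections $s_\al$, and $s_\al(\lam)=\lam-\ex{\al^\vee,\lam}\al^\vee$ (in the normalization where $\ex{\,,\,}$ is the Cartan--Killing form) differs from $\lam$ by an integer multiple of the coroot $\al^\vee$; composing reflections, $w\gamma-\gamma\in Q^\vee$ for all $w$.

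First I would record that $e^{\ex{\rho,w\gamma}}/e^{\ex{\rho,\gamma}}=e^{\ex{\rho,w\gamma-\gamma}}$, so it suffices to show $\ex{\rho,\mu}\in 2\pi i\ZZ$ for every $\mu\in Q^\vee$ that arises as $w\gamma-\gamma$ with $\gamma\in\Gamma$. Second, since $\rho$ is the Weyl vector, the pairing $\ex{\rho,\al^\vee}$ with a simple coroot equals $1$ (this is the standard fact $\ex{\rho,\al_i^\vee}=1$ for simple roots $\al_i$), so $\ex{\rho,\mu}\in\ZZ$ for $\mu\in Q^\vee$ automatically. But we need a factor of $2\pi i$: the correct statement is that $w\gamma-\gamma\in 2\pi i Q^\vee$ when $\gamma\in\Gamma$. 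This follows because $\Gamma\subseteq 2\pi i P^\vee$ (the characteristic lattice sits inside $2\pi i$ times the coweight lattice, as $e^\gamma=1$ forces $\ex{\al,\gamma}\in 2\pi i\ZZ$ for all $\al$, i.e. $\gamma/2\pi i\in P^\vee$), and $w\lam-\lam\in Q^\vee$ for every $\lam\in P^\vee$ by the simple-reflection computation above. Hence $w\gamma-\gamma=2\pi i\,\mu$ with $\mu\in Q^\vee$, and therefore $\ex{\rho,w\gamma-\gamma}=2\pi i\ex{\rho,\mu}\in 2\pi i\ZZ$, giving $e^{\ex{\rho,w\gamma-\gamma}}=1$, which is exactly the assertion.

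Concretely the steps are: (i) reduce to showing $e^{\ex{\rho,w\gamma-\gamma}}=1$; (ii) show $\gamma/2\pi i$ lies in the coweight lattice $P^\vee$ using $e^\gamma=1$ and Lemma \ref{eq:pairingId1}; (iii) show $w\lam-\lam\in Q^\vee$ for $\lam\in P^\vee$ and $w\in W$, by reducing to simple reflections and using $s_\al\lam=\lam-\ex{\al^\vee,\lam}\al^\vee$ with $\ex{\al^\vee,\lam}\in\ZZ$; (iv) use $\ex{\rho,\al^\vee}=1$ for simple coroots to conclude $\ex{\rho,Q^\vee}\subseteq\ZZ$, hence $\ex{\rho,w\gamma-\gamma}\in 2\pi i\ZZ$. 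I expect step (iii)—the closure of $W$-orbit differences inside the coroot lattice—to be the only point requiring a short argument; everything else is bookkeeping with the standard duality between root/coroot and weight/coweight lattices. An alternative, perhaps cleaner, route avoids $\rho$ explicitly: note $e^{\ex{\rho,\gamma}}=\prod_{\al\in R_+}e^{\ex{\al,\gamma}/2}=\pm1$ by Lemma \ref{eq:pairingId1} as already used in \eqref{eq:WeyldenId}, and the sign is $\prod_{\al\in R_+}(-1)^{\ex{\al,\gamma}/2\pi i}$; since $w$ permutes $R_+$ only up to sign changes that come in $W$-orbits, the product of signs over $R_+$ is $W$-invariant when evaluated on $\gamma\in\Gamma$. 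I would present whichever of these is shorter once the details are checked.
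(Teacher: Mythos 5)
Your proof is correct in substance, but it is essentially the ``dual'' of the paper's argument and is somewhat longer. The paper uses $\ex{\rho,w\gamma}=\ex{w^{-1}\rho,\gamma}$ to move $w$ onto $\rho$, reduces to a simple reflection, and then the whole content is the one-line identity $w_{\al}\rho=\rho-\ex{\rho,\al^\vee}\al=\rho-\al$, after which Lemma \ref{eq:pairingId1} applied to the root $\al$ finishes the proof. You instead keep $w$ on $\gamma$ and show $w\gamma-\gamma\in 2\pi i\,Q^\vee$ via $\Gamma\subseteq 2\pi i\,P^\vee$, then pair with $\rho$ using $\ex{\rho,Q^\vee}\subseteq\ZZ$. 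Both arguments rest on the same two inputs (Lemma \ref{eq:pairingId1} and $\ex{\rho,\al_i^\vee}=1$ for simple roots), and for a single simple reflection they compute literally the same integer $\ex{\al,\gamma}/2\pi i$; the difference is that your version routes through the coweight and coroot lattices, which the paper's version avoids entirely. One slip to fix: the reflection formula is $s_\al\lam=\lam-\ex{\al,\lam}\al^\vee$ (equivalently $\lam-\ex{\al^\vee,\lam}\al$), not $\lam-\ex{\al^\vee,\lam}\al^\vee$. As written, your step (iii) pairs $\lam\in P^\vee$ against $\al^\vee$, and $\ex{\al^\vee,\lam}$ need not be an integer outside the simply-laced case; with the correct formula, $\ex{\al,\lam}\in\ZZ$ is exactly the defining property of $P^\vee$ and the step goes through. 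Finally, I would drop the ``alternative route'' sketched at the end: the assertion that the sign $\prod_{\al\in R_+}(-1)^{\ex{\al,\gamma}/2\pi i}$ is unchanged when $\gamma$ is replaced by $w\gamma$ is precisely the statement to be proved, and ``$w$ permutes $R_+$ only up to sign changes that come in $W$-orbits'' is not yet an argument.
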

\begin{proof}
Since $\ex{\rho,w \gamma}=\ex{w^{-1}\rho, \gamma}$, it is sufficient to verify this formula for a Weyl reflection $w_{\alpha}$, where $\alpha$ is a simple root.  
Denoting by $\alpha^{\vee}$ the corresponding coroot, we have
$$
w_{\al} \rho=\rho- \ex{\rho,\al^\vee}\al=\rho-\al,
$$
since $\ex{\rho,\alpha^{\vee}}=1$.
Then by Lemma \ref{eq:pairingId1} we obtain
$$
e^{\ex{w_{\al} \rho, \gamma}}=e^{\ex{\rho-\al, \gamma}}=e^{\ex{\rho, \gamma}}.\qedhere
$$
\end{proof}

Using  \eqref{eq:WeyldenId} and Lemma \ref{Lemma2}, we obtain
$${1\ov \mathfrak s(h_1+\gamma_1)\frak{s}(h_{2} +\gamma_{2})}= e^{\ex{\rho,w_1 \gamma_1-w_2\gamma_2 } } {1\ov \mathfrak s(h_1)\mathfrak s(h_2)  }.$$
Now for fixed $w_1,w_2\in W$ we put
$$\tilde \gamma_1={w_1 \gamma_1+w_2\gamma_2\ov 2}, \quad \tilde \gamma_2 =w_1\gamma_1-w_2\gamma_2,$$
so for $w_{1}\gamma_{1}, w_{2}\gamma_{2}\in\Gamma$ we have
$w_{1}\gamma_{1}=\tilde \gamma_1 + \frac{1}{2}\tilde\gamma_{2}$ and $w_{2}\gamma_{2}=\tilde \gamma_1-\frac{1}{2}\tilde\gamma_{2}$.
Thus if $\tilde\gamma_{2}=2\lambda_{2} +\sigma$, where $\lambda_{2}\in\Gamma$ and $\sigma\in\Gamma/2\Gamma$, then necessarily $\tilde\gamma_{1}=\lambda_{1}+\frac{1}{2}\sigma$, where $\lambda_{1}\in\Gamma$, and we have an isomorphism
\beq\label{e:iso}
\Gamma\times\Gamma\simeq\bigsqcup_{\sigma\in\Gamma/2\Gamma}(\Gamma+\tfrac{1}{2}\sigma)\times (2\Gamma+\sigma).
\eeq
Thus we can rewrite the double sum over $\Gamma\times\Gamma$ in \eqref{eq:inter} as follows
$$\sum_{\gamma_{1}\in\Gamma}\sum_{\gamma_{2}\in\Gamma}f(\tilde{\gamma}_{1},\tilde\gamma_{2})=\sum_{\lambda_{2}\in 2\Gamma}\sum_{\sigma\in\Gamma/2\Gamma}\sum_{\lambda_{1}\in\Gamma}f(\lambda_{1}+\tfrac{1}{2}\sigma,\lambda_{2}+\sigma).$$
Combining the sum over $\lambda_{1}$ with the integral along the maximal torus, after change of variables $h\to h +(w_{1}h_{1}+w_{2}h_{2}-\sigma)/2$ we obtain a simple Gaussian integral over $\frak{t}\simeq\RR^{r}$,
\beq \label{eq:Tint}
 \sum_{\lambda_1\in \Gamma} \int_\TT e^{-{2\ov \be} \Vert h+\lambda_{1}\Vert^{2}}d\bm{t}= \left({\pi \be\ov 2}\right)^{r\ov 2},
\eeq
where we used our choice of the Haar measure on $G$.

Thus we obtain
\beq 
\begin{gathered}
\Tr_{L^2(G)} \left[L_{g_l} R_{g_r^{-1}} e^{-{1\ov 2}\be {\Delta_G}} \right] ={\text{vol}(\TT) e^{{1\ov 2}\be\ex{\rho,\rho}} \ov |W|(2\pi\be) ^{r/2} \mathfrak s(h_1)\mathfrak s(-h_2)} 
\\
 \times \sum_{\tilde \gamma_2\in\Gamma} 
\sum_{w_{1},w_{2}\in W} \e(w_1)\e(w_2) e^{\ex{\rho, \tilde \gamma_2 }} e^{-{1\ov2\be}\Vert w_1h_1-w_2h_2+\tilde \gamma_2\Vert^{2} }.
\end{gathered}
\eeq
Introducing new variables $w=w_1^{-1}w_2\in W$ and $\gamma=w_1^{-1}\tilde \gamma_2 \in\Gamma$, we obtain
\beq
\begin{gathered}
 \sum_{\tilde \gamma_2\in\Gamma} 
\sum_{w_1,w_{2}\in W} \e(w_1)\e(w_2) e^{\ex{\rho, \tilde \gamma_2 }} e^{-{1\ov2\be}\Vert w_1h_1-w_2h_2+\tilde \gamma_2\Vert^{2} }
\\=   \sum_{w_1\in W} \sum_{ (w,\gamma) \in W\times\Gamma} 
\e(w) e^{\ex{\rho, \gamma }} e^{-{1\ov2\be}\Vert h_1-wh_2+\gamma\Vert^{2} }
\\=|W| \sum_{ (w,\gamma) \in W\times\Gamma}  
\e(w)e^{\ex{\rho, \gamma}} e^{-{1\ov2\be}\Vert h_1-wh_2+\gamma\Vert^{2} }.
\end{gathered}
\eeq
This completes our second derivation of the generalized Frenkel trace formula \eqref{eq:Frenkel trace formula final} using the gauged sigma model picture.

\section{Outlook}

We conclude by outlining several future directions suggested by our results.

First, we emphasize again that the \emph{Frenkel trace formula} naturally bridges the nonlinear sigma model (NLSM) approach underlying the Eskin formula \cite{Choi:2021yuz} and the gauged sigma model approach to the Selberg trace formula \cite{Choi:2023pjn}. Since the Eskin formula arises as a special degenerate limit of the Frenkel trace formula, it is clear that the same gauged sigma model framework should provide an alternative and conceptually straightforward derivation of the Eskin trace formula.

In contrast, the Selberg trace formula presents a qualitatively different situation. In that case, we were unable to apply our new localization principle directly at the level of supersymmetric quantum mechanics with target a hyperbolic Riemann surface. Instead, the gauged sigma model description appears to be more powerful: its localization framework provides a unified semiclassical interpretation of the identity, hyperbolic, and elliptic contributions to the Selberg trace formula. Historically, semiclassical analyses of a particle on a Riemann surface have only transparently captured the hyperbolic sector \cite{gutzwiller1980classical,gutzwiller1985geometry}. Establishing a comparable semiclassical interpretation for all contributions directly within the particle-on-a-Riemann-surface picture therefore seems to be a necessary step toward realizing localization in the NLSM framework.

A second natural direction is the extension of the Frenkel trace formula to \emph{non-compact semisimple Lie groups}, such as $SL(2,\mathbb{R})$ or $SL(2,\mathbb{C})$. The generalization of the Eskin formula to non-compact semisimple Lie groups—physically corresponding to the real-time quantum mechanical propagator between states related by exponentiated group elements $g=e^{h}$, $h\in\mathfrak g$—was first studied in \cite{Krausz:1997gw}, and later reformulated compactly using the localization principle of the present work in \cite{Choi:2023pjn}. In the Frenkel setting, however, additional structure arises because left and right twists may belong to distinct conjugacy classes of $G$. Understanding this richer structure may shed light on the nature of quantum mechanical propagation in non-compact Lie groups, particularly in so-called \emph{shadow regions}, where group elements are not connected to the identity by exponentiation (i.e.\ $G \neq e^{\mathfrak g}$), as occurs for example in $SL(2,\mathbb{R})$.

Another important generalization concerns \emph{higher-dimensional quantum field theories}, where the most natural framework is provided by two-dimensional sigma models with target space \(G\). In \cite{Murthy:2025ioh}, such generalizations were studied in the context of WZW conformal field theories with compact Lie groups, as well as non-compact targets such as \(SL(2,\mathbb{R})\) and \(H_3^+\). Since the (supersymmetric) quantum mechanics on \(G\) appearing in our localization framework can be viewed as a dimensional reduction of (supersymmetric) WZW models, it would be interesting to apply the gauged sigma model perspective developed here to gain further insight into the two-dimensional NLSM formulation, as well as to analyze coset conformal field theories.

It is also worth noting possible connections to related developments in two-dimensional conformal field theory and holography. Trace-formula-like structures have appeared in the study of AdS$_3$/CFT$_2$, including relations between modular invariance, spectral statistics, and random matrix theory, as well as in Gutzwiller-type trace formulas for two-dimensional CFT spectra \cite{DiUbaldo:2023qli}. While these works address rather different physical questions, the appearance of trace-formula-like structures in both settings highlights intriguing structural parallels, whose precise relationship to the framework developed here would be interesting to explore.

Finally, it would be interesting to explore the \emph{holographic interpretation} of the supersymmetric deformations introduced in this work, at least in the setting of a particle on a Lie group or supersymmetric WZW models. As discussed for example in \cite{Mertens:2018fds}, a particle on a Lie group may be interpreted as a holographic dual of BF theory with suitable boundary conditions, closely related to the Chern-Simons/WZW correspondence \cite{Witten:1988hf,Elitzur:1989nr}. Understanding how our localization framework fits into this holographic picture could provide a deeper conceptual understanding of the trace formulas studied here.

\bibliographystyle{amsplain}
\bibliography{Ref.bib}

\end{document}